\begin{document}


%

\title{Degrees of Freedom of Multi-Source Relay Networks}

\author{
\authorblockN{Sang-Woon Jeon, Sae-Young Chung}
\authorblockA{School of EECS\\ KAIST\\ Daejeon, Korea \\
Email: swjeon@kaist.ac.kr, sychung@ee.kaist.ac.kr} \and
\authorblockN{Syed A. Jafar}
\authorblockA{Electrical Engineering and Computer Science\\
University of California, Irvine\\
Irvine, CA 92697, USA\\
Email: syed@uci.edu } }

\maketitle


\newtheorem{definition}{Definition}
\newtheorem{theorem}{Theorem}
\newtheorem{lemma}{Lemma}
\newtheorem{example}{Example}
\newtheorem{corollary}{Corollary}
\newtheorem{proposition}{Proposition}
\newtheorem{conjecture}{Conjecture}
\newtheorem{remark}{Remark}

\def \diag{\operatornamewithlimits{diag}}
\def \min{\operatornamewithlimits{min}}
\def \max{\operatornamewithlimits{max}}
\def \log{\operatorname{log}}
\def \max{\operatorname{max}}
\def \rank{\operatorname{rank}}
\def \out{\operatorname{out}}
\def \exp{\operatorname{exp}}
\def \arg{\operatorname{arg}}
\def \E{\operatorname{E}}
\def \tr{\operatorname{tr}}
\def \SNR{\operatorname{SNR}}
\def \SINR{\operatorname{SINR}}
\def \dB{\operatorname{dB}}
\def \ln{\operatorname{ln}}
\def \th{\operatorname{th}}

\begin{abstract}
We study a multi-source Gaussian relay network consisting of $K$ source--destination pairs having $K$ unicast sessions.
We assume $M$ layers of relays between the sources and the destinations.
We find achievable degrees of freedom of the network.
Our schemes are based on interference alignment at the transmitters and symbol extension and opportunistic interference cancellation at the relays.
For $K$-$L$-$K$ networks, i.e., $2$-hop network with $L$ relays, we show $\min\{K,K/2+L/(2(K-1))\}$ degrees of freedom are achievable.
For $K$-hop networks with $K$ relays in each layer, we show the full $K$ degrees of freedom are achievable provided
that $K$ is even and the channel distribution satisfies a certain symmetry.
\end{abstract}

\section{Introduction} \label{sec:intro}
Capacity characterization of multi-source networks is one of the fundamental problems in network information theory.
However the capacity is not fully characterized even for the simplest setting of the two-user interference channel \cite{Han:81} that leads to surging interests and demands on approximate capacity characterization.
Recently there has been a series of significant progress on approximate capacity characterization \cite{Etkin:08, Bresler:07, Nam:08, Viveck1:08, Tiangao:09, Viveck3:09, Viveck2:09, Viveck1:09}.
The capacity region of the two-user Gaussian interference channel was characterized by Etkin, Tse, and Wang within one bit precision \cite{Etkin:08} and the sum capacity $C_{\Sigma}(P)$ of the $K$-user time-varying Gaussian interference channel was characterized as
\begin{equation*}
C_{\Sigma}(P)=\frac{K}{2}\log(P)+o(\log(P))
\end{equation*}
by Cadambe and Jafar \cite{Viveck1:08}, where $P$ denotes the signal to noise ratio (SNR). That is, the degrees of freedom (DoF) or capacity pre-log term of the $K$-user interference channel is given by $K/2$ \footnote{Unless otherwise stated, we assume time-varying channel in the rest of the paper}.
To achieve $K/2$ DoF, a new interference management technique called the interference alignment was used, which minimizes the dimensions occupied by interference at destinations by aligning the interference from multiple unintended sources.
The interference alignment can also be used to compute achievable DoF of some other channels such as the $K$-user multiple-input multiple-output (MIMO) channel \cite{Tiangao:09}, the deterministic $K$-user interference channel \cite{Viveck3:09}, and the $X$-network \cite{Viveck2:09} in which each of $S$ sources has messages for $D$ destinations, i.e., total of $SD$ message sets.

In this paper, we consider relay networks consisting of multiple sources, multiple corresponding destinations, and multiple relays. Relays have been traditionally used for extending coverage in wireless environments, e.g., amplify-and-forward (AF) based relays. Although the DoF is upper bounded by $\frac{K}{2}$ for fully connected $K$-user interference channels~\cite{HostMadsen:05}, with help of relays it may be possible to improve the DoF also.

The work \cite{Viveck2:09} has applied their $X$ network results to a two-hop network with $S$ sources $D$ destinations with $L$ relays between them.
Assuming $K=S=D$, they showed that $\frac{KL}{K+L-1}$ DoF is achievable\footnote{In~\cite{Viveck2:09} it was half of this. Since we assume full duplex relaying in this paper, we have adjusted it accordingly.}.
Notice that whereas a trivial upper bound assuming perfect cooperation between relays is $K$ if $L\geq K$, the achievable DoF of $\frac{KL}{K+L-1}$ converges to $K$ only if $L\to\infty$.
Hence one of the basic questions about relay networks is the minimum number of required relays for achieving the optimal $K$ DoF.
The works \cite{Bolcskei:06, Morgenshtern:07} have addressed similar questions and shown that with $K$ fixed the sum rate of $K\log(L)+O(1)$ is achievable if $L\to\infty$ \cite{Bolcskei:06}.\footnote{Some assumptions such as the availability of channel state information are different from ours.}

The main contributions of this paper are the follows.
\begin{itemize}
\item For $K$-$L$-$K$ networks, i.e., $2$-hop network with $L$ relays, we show that $\min\big\{K,\frac{K}{2}+\frac{L}{2(K-1)}\big\}$ DoF is achievable.
Hence the optimal $K$ DoF is achievable if $L\geq K(K-1)$.
To show the achievability, interference alignment combined with distributed interference cancellation using multiple relays is used over multiple symbols (symbol extension) to utilize more diversity provided by time-varying channels.
A similar interference cancellation technique called interference neutralization was used for deterministic and non-fading Gaussian $2$-user $2$-hop networks~\cite{Mohajer:08, Mohajer:09}, where multiple relays are cooperatively used to cancel interference. In our case, such distributed interference cancellation is combined with symbol extension in a more general network.

\item For $K$-hop networks with $K$ relays in each layer, we show that the optimal $K$ DoF is achievable if $K$ is even and the probability of channel matrix is a function of its Frobenius norm only.
We apply a new technique called opportunistic interference cancellation where the relays in each layer delay-amplify-and-forward their received signal vector by waiting for an appropriate channel instance in the next hop such that the overall channel matrix from the sources to destinations become a scaled identity matrix.
Hence $K$ source--destination (S--D) pairs can communicate concurrently without interference.
This is related to the opportunistic interference cancellation for finite-field networks~\cite{Jeon:09}, but our scheme for Gaussian relay networks in this paper works differently.
For both cases, opportunistic interference cancellation is applied and the optimal DoF of Gaussian networks and the optimal sum rate of finite-field networks are achieved in certain cases.
Notice that $K(K-1)$ relays are again required to obtain $K$ DoF.
\end{itemize}

The concept of opportunistic channel pairing can be found in \cite{JeonITA:09, JeonISIT:09, Nazer:09} for the finite-field interference channel and in \cite{Nazer:09, Jafar:09} for the Gaussian interference channel, which it was called ergodic interference alignment.
For the multi-hop case, opportunistic interference cancellation was applied for finite-field networks in \cite{JeonISIT:09}.

\section{System Model} \label{sec:sys_model}
Throughout the paper, we use notations $\mathbf{A}$ and $\mathbf{a}$ to denote a matrix and a vector, respectively.
The transpose, conjugate transpose, and Frobenius norm of $\mathbf{A}$ (or $\mathbf{a}$) are denoted by $\mathbf{A}^T$, $\mathbf{A}^{\dagger}$, and $\|\mathbf{A}\|_F$ (or $\mathbf{a}^T$, $\mathbf{a}^{\dagger}$, and $\|\mathbf{a}\|_F$), respectively.
The diagonal matrix having $a_i$ as the $i$-th diagonal element, the $n_1\times n_1$ identity matrix, and the $n_1\times n_2$ all-zero matrix are denoted by $\diag(a_1,\cdots, a_{n_1})$, $\mathbf{I}_{n_1}$, and $\mathbf{0}_{n_1\times n_2}$, respectively.
We also use $\bar{\mathbf{A}}$ and $\bar{\mathbf{a}}$ to denote $N$-symbol-extended matrix and vector consisting of $a[t=1]$ through $a[t=N]$, respectively.
That is, $\bar{\mathbf{A}}=\diag(a[1],\cdots,a[N])$ and $\bar{\mathbf{a}}=[a[1],\cdots,a[N]]^T$.

\subsection{Gaussian Relay Networks}
We study a $M$-hop relay network having $M+1$ layers with $K_m$ nodes in the $m$-th layer.
The nodes in the first and the last layer are the sources and the destinations, respectively.
Thus $K=K_1=K_{M+1}$ is the number of S--D pairs.
For simplicity, let us denote the $i$-th node in the $m$-th layer by node $(i,m)$.

For $m\in\{1,\cdots,M\}$, let $x_{i,m}[t]$ denote the transmit signal of node $(i,m)$ at time $t$.
Then the received signal $y_{j,m}[t]$ of node $(j,m+1)$ at time $t$ is given by\footnote{The subscript $m$ in $y_{j,m}[t]$ means the $m$-th hop.}
\begin{equation*}
y_{j,m}[t]=\sum_{i=1}^{K_m}h_{ji,m}[t]x_{i,m}[t]+z_{j,m}[t],
\end{equation*}
where $h_{ji,m}[t]$ is the channel from node $(i,m)$ to node $(j,m+1)$ at time $t$ and $z_{j,m}[t]$ is the noise at node $(j,m+1)$ at time $t$.
The noise terms $z_{j,m}[t]$'s are independent and identically distributed (i.i.d.) complex Gaussian with zero-mean and unit-variance.
We assume time-varying channels such that $h_{ji,m}[t]$'s are i.i.d. drawn from a continuous distribution and $\Pr(h_{ji,m}[t]=h)=0$ for all $h\in\mathbb{C}$.
We assume every source and relay has the same power constraint $P$.

Let us denote the transmitted and received signal vectors of the $m$-th hop by $\mathbf{x}_m[t]=[x_{1,m}[t], \cdots, x_{K_m,m}[t]]^T$ and $\mathbf{y}_m[t]=[y_{1,m}[t], \cdots, y_{K_{m+1},m}[t]]^T$, respectively.
Then the input output relation of the $m$-th hop can be represented as
\begin{equation}
\mathbf{y}_m[t]=\mathbf{H}_m[t]\mathbf{x}_m[t]+\mathbf{z}_m[t],
\label{eq:input_output_vec}
\end{equation}
where $\mathbf{H}_m[t]$ is the channel matrix of the $m$-th hop whose $(j,i)$-th element is given by $h_{ji,m}[t]$ and $\mathbf{z}_m[t]=[z_{1,m}[t],\cdots,z_{K_{m+1},m}[t]]^T$ is the noise vector of the $m$-th hop.
The channel state information (CSI) is assumed to be available at all nodes, i.e., each node knows $\mathbf{H}_1[t]$ to $\mathbf{H}_M[t]$ at time $t$.
For simplicity, we omit time index $t$ in the rest of the paper.

\subsection{Degrees of Freedom}
The $i$-th source sends a message $W_i \in \{1,2,\ldots,2^{nR_i(P)}\}$ to its destination at a rate of $R_i(P)$ during $n$ channel uses.
The rate tuple $(R_1(P),\cdots,R_K(P))$ is said to be achievable if the probability of error for all S--D pairs can be made arbitrarily small by choosing large enough $n$.
The capacity region $\mathcal{C}(P)$ is the convex hull of the closure of all achievable rate tuples and the sum capacity $C_{\Sigma}(P)$ is the supremum of all achievable sum rates.
Then the DoF is defined as
\begin{equation*}
d_{\Sigma} \triangleq \lim_{P\to\infty}\frac{C_{\Sigma}(P)}{\log P}.
\end{equation*}

\section{Achievability for $K$-$L$-$K$ Networks}
In this section, we consider a two-hop network with $K_1=K_3=K$ and $K_2=L$ and assume $L\geq K$, which is denoted by the $K$-$L$-$K$ network.

\subsection{Interference Cancellation and Alignment}
We assume AF relaying.
Because multiple replicas of a transmit signal interfere with unintended destinations through multiple relays, we can make the replicas cancel each other by appropriately choosing the gains at relays. Since we assume symbol extension, i.e., we send messages using multiple channel instances, we have vectorized AF, i.e., each relay can multiply its received vector by a matrix.

\subsubsection{AF Based Relay}
Consider $N$ symbol extension.
Then the received signal vector of the $j$-th relay can be represented as
\begin{equation}
\bar{\mathbf{y}}_{j,1}=\sum_{i=1}^K\bar{\mathbf{H}}_{ji,1}\bar{\mathbf{x}}_{i,1}+\bar{\mathbf{z}}_{j,1},
\label{eq:input_output_1hop}
\end{equation}
where $j\in\{1,\cdots,L\}$ and
\begin{eqnarray*}
\bar{\mathbf{x}}_{i,1}\!\!\!\!\!\!\!\!\!\!\!&&=[x_{i,1}[1],\cdots,x_{i,1}[N]]^T\nonumber\\
\bar{\mathbf{y}}_{j,1}\!\!\!\!\!\!\!\!\!\!\!&&=[y_{j,1}[1],\cdots,y_{j,1}[N]]^T\nonumber\\
\bar{\mathbf{z}}_{j,1}\!\!\!\!\!\!\!\!\!\!\!&&=[z_{j,1}[1],\cdots,z_{j,1}[N]]^T\nonumber\\
\bar{\mathbf{H}}_{ji,1}\!\!\!\!\!\!\!\!\!\!\!&&=\diag(h_{ji,1}[1],\cdots,h_{ji,1}[N]).
\end{eqnarray*}
Each relay transmits $N$ linear combinations of its $N$ received signals to the destinations.
Specifically, the transmit signal vector of the $j$-th relay is given by
\begin{equation}
\bar{\mathbf{x}}_{j,2}=\mathbf{\Gamma}_j \bar{\mathbf{y}}_{j,1},
\label{eq:relay_func}
\end{equation}
where
\begin{equation*}
\mathbf{\Gamma}_j=\left[
                    \begin{array}{ccc}
                      \gamma_j[1,1] & \cdots & \gamma_j[1,N] \\
                      \vdots & \ddots & \vdots \\
                      \gamma_j[N,1] & \cdots & \gamma_j[N,N] \\
                    \end{array}
                  \right]
\end{equation*}
represents the gain matrix of the $j$-th relay.\footnote{We assume block Markov coding is used at relays, i.e., each relay collects $N$ symbols, applies a linear transform, and sends it in the next $N$ time slots. Therefore, there will be one block delay at the relays. To simplify notations, we omit block indices.}
Similarly, the received signal vector of the $k$-th destination can be represented as
\begin{equation}
\bar{\mathbf{y}}_{k,2}=\sum_{j=1}^L\bar{\mathbf{H}}_{kj,2}\bar{\mathbf{x}}_{j,2}+\bar{\mathbf{z}}_{k,2},
\label{eq:input_output_2hop}
\end{equation}
where $k\in\{1,\cdots,K\}$.
Combining (\ref{eq:input_output_1hop}) through (\ref{eq:input_output_2hop}), we obtain
\begin{eqnarray} \label{eq:input_output}
\bar{\mathbf{y}}_{k,2}\!\!\!\!\!\!\!\!\!\!&&=\sum_{j=1}^{L}\bar{\mathbf{H}}_{kj,2}\mathbf{\Gamma}_j\bar{\mathbf{H}}_{jk,1}\bar{\mathbf{x}}_{k,1}+\sum_{i=1,i\neq k}^K \sum_{j=1}^{L}\bar{\mathbf{H}}_{kj,2}\mathbf{\Gamma}_j\bar{\mathbf{H}}_{ji,1}\bar{\mathbf{x}}_{i,1}\nonumber\\
&&+\sum_{j=1}^L\bar{\mathbf{H}}_{kj,2}\mathbf{\Gamma}_j\bar{\mathbf{z}}_{j,1}+\bar{\mathbf{z}}_{k,2}.
\end{eqnarray}
Notice that the first term is the intended signals and the second term is the interfering signals and the third term is the noise propagation due to AF based relaying.
Note that the following condition guarantees that the interference from the $i$-th source to the $k$-th destination ($i\neq k$) will be nullified.
\begin{equation}
\sum_{j=1}^{L}\bar{\mathbf{H}}_{kj,2}\mathbf{\Gamma}_j\bar{\mathbf{H}}_{ji,1}\bar{\mathbf{v}}_{i}=\mathbf{0}_{N\times 1}.
\end{equation}
Fig. \ref{fig:int_mit} illustrates this.
The detailed analysis will be given in Lemma \ref{lemma:n_cancel}.

\begin{figure}[t!]
  \begin{center}
  \scalebox{0.45}{\includegraphics{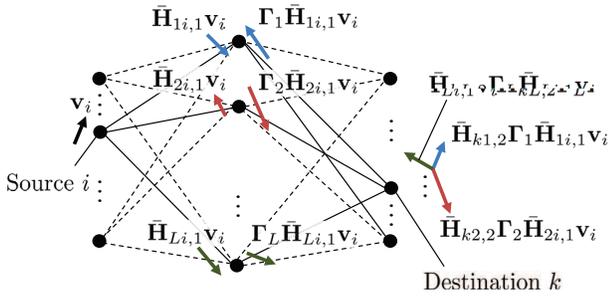}}
  \caption{Interference cancellation using relays.}
  \label{fig:int_mit}
  \end{center}
\end{figure}

\subsubsection{Transmission scheme}
For transmission, we only use the time slots where channel gains satisfy $g_{\min}\leq |\mathbf{h}_{ji,m}[t]|\leq g_{\max}$ for all $i$ and $j$, where $g_{\min}>0$, $g_{\max}>0$, and $g_{\max}>g_{\min}$.
Since we assume $\Pr(\mathbf{h}_{ji,m}[t]=h)=0$, the probability of slot utilization can be arbitrarily close to one by setting $g_{\min}$ and $g_{\max}$ as arbitrarily small and large, respectively.
As a result introducing $g_{\min}$ and $g_{\max}$ does not affect the DoF.

For $N$ symbol extension, we allocate $N_1+N_2$ symbols to the first S--D pair and $N_1+N_3$ symbols to each of the remaining $K-1$ S--D pairs.
The first source transmits $N_1$ symbols without transmit beamforming and transmits $N_2$ symbols via beamforming vectors $\mathbf{v}^{(1)}_1$ to $\mathbf{v}^{(N_2)}_1$.
Similarly, for $i\in\{2,\cdots, K\}$, the $i$-th source transmits $N_1$ symbols without beamforming and $N_3$ symbols via $\mathbf{v}^{(1)}_i$ to $\mathbf{v}^{(N_3)}_i$.
The interference caused by the $N_1$ symbols will be cancelled by using relay coefficients and the interference caused by the remaining symbols will be aligned by using transmit beamforming.

\subsection{DoF of $K$-$L$-$K$ networks}
The following lemma shows that each source can transmit $N_1$ symbols without interfering with unintended destinations if we set $N_1\leq \min\left\{\left\lfloor\frac{LN^2-1}{K(K-1)N}\right\rfloor,N\right\}$.
\begin{lemma} \label{lemma:n_cancel}
Suppose a $K$-$L$-$K$ network with $N$ time extension.
Then there exist $\mathbf{\Gamma}_i$'s such that each source transmits up to $\min\left\{\left\lfloor\frac{LN^2-1}{K(K-1)N}\right\rfloor,N\right\}$ symbols without interfering with unintended destinations.
\end{lemma}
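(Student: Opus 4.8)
The plan is to treat the entire relay design as one large homogeneous linear system in the relay gain coefficients and to finish by a dimension count. First I would stack all entries of the gain matrices into a single vector $\mathbf{g}=(\gamma_j[a,b])_{1\le j\le L,\,1\le a,b\le N}\in\mathbb{C}^{LN^2}$. For a fixed channel realization the effective end-to-end channel from source $i$ to destination $k$, namely $\mathbf{G}_{ki}=\sum_{j=1}^{L}\bar{\mathbf{H}}_{kj,2}\mathbf{\Gamma}_j\bar{\mathbf{H}}_{ji,1}$, is \emph{linear} in $\mathbf{g}$, since the extended matrices $\bar{\mathbf{H}}_{kj,2}$ and $\bar{\mathbf{H}}_{ji,1}$ are given. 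Each source sends its $N_1$ unbeamformed symbols along $N_1$ fixed coordinate directions $\mathbf{e}_1,\dots,\mathbf{e}_{N_1}$, so the cancellation condition stated before the lemma specializes, with $\bar{\mathbf{v}}_i=\mathbf{e}_\ell$, to the requirement $\mathbf{G}_{ki}\mathbf{e}_\ell=\mathbf{0}_{N\times1}$ for all $\ell\in\{1,\dots,N_1\}$ and all $k\neq i$; these are homogeneous linear equations in $\mathbf{g}$.

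Next I would count equations against unknowns. There are $K(K-1)$ ordered interfering pairs $(k,i)$ with $i\neq k$; for each pair I impose the vanishing of $N_1$ columns of $\mathbf{G}_{ki}$, and each such column is an $N$-vector, i.e. $N$ scalar equations. Hence the nulling conditions amount to $K(K-1)N_1N$ scalar equations, which I assemble as $\mathbf{M}\mathbf{g}=\mathbf{0}$ with $\mathbf{M}$ of size $\big(K(K-1)N_1N\big)\times\big(LN^2\big)$. Since $\rank(\mathbf{M})$ is at most its number of rows, the solution space has dimension at least $LN^2-K(K-1)N_1N$, regardless of any linear dependence among the constraints (dependence can only enlarge the null space, so I need not analyze the rank of $\mathbf{M}$).

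The construction is useful only if $\mathbf{g}\neq\mathbf{0}$, because $\mathbf{g}=\mathbf{0}$ would also annihilate the desired signal; this is exactly what forces a strict inequality. A nonzero solution is guaranteed as soon as the null space has positive dimension, i.e. $LN^2>K(K-1)N_1N$. Because all quantities are integers, this is equivalent to $K(K-1)N_1N\le LN^2-1$, that is $N_1\le\big\lfloor\frac{LN^2-1}{K(K-1)N}\big\rfloor$. Combining this with the trivial bound $N_1\le N$ (a source cannot place more than $N$ symbols in $N$ extended slots) yields the claimed $\min\big\{\big\lfloor\frac{LN^2-1}{K(K-1)N}\big\rfloor,\,N\big\}$, and any nonzero vector of the null space supplies the required $\mathbf{\Gamma}_j$'s.

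The step I expect to be the real obstacle is not the existence of a nonzero nulling vector, which is pure counting, but verifying that this choice still lets each source deliver its $N_1$ symbols, i.e. that the intended effective channel $\mathbf{G}_{kk}$ restricted to the $N_1$ transmit directions keeps rank $N_1$ so the symbols remain separable at destination $k$. I would handle this by a genericity argument: the relevant minor is a polynomial in the channel coefficients and in the chosen null-space vector that is not identically zero, so by the assumption that the $h_{ji,m}$ are drawn from a continuous distribution with $\Pr(h_{ji,m}=h)=0$ it vanishes only on a measure-zero set, and full rank therefore holds almost surely. A secondary point I would keep clean is that imposing cancellation on all $K(K-1)$ pairs simultaneously never couples in a desired-channel constraint, since $\mathbf{G}_{kk}$ is deliberately left unconstrained in the system $\mathbf{M}\mathbf{g}=\mathbf{0}$.
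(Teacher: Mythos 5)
Your proposal is correct and follows what is evidently the intended argument: the paper defers the proof to its full version, but the cancellation condition $\sum_{j=1}^{L}\bar{\mathbf{H}}_{kj,2}\mathbf{\Gamma}_j\bar{\mathbf{H}}_{ji,1}\bar{\mathbf{v}}_{i}=\mathbf{0}$ it displays is exactly a homogeneous linear system in the $LN^2$ relay gains, and the bound $\lfloor (LN^2-1)/(K(K-1)N)\rfloor$ is precisely your count of $K(K-1)N_1N$ equations against $LN^2$ unknowns. You also correctly flag the one delicate point (that the chosen null-space vector, which itself depends on the channel realization, must leave the desired channels $\bar{\mathbf{G}}_{kk}$ full rank on the $N_1$ signal directions almost surely); your genericity sketch is the standard resolution, though a complete write-up would need to exhibit one channel/gain configuration where the relevant minor is nonzero before invoking the measure-zero argument.
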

\begin{proof}
we refer readers to the full paper \cite{Jeon2:09}.
\end{proof}

Then we apply the interference alignment technique to the remaining interference and obtain the achievable DoF in the following theorem.
\begin{theorem} \label{th:K_L_K}
Suppose a $K$-$L$-$K$ network. Then $d_{\Sigma}\geq \min\left\{K,\frac{K}{2}+\frac{L}{2(K-1)}\right\}$.
\end{theorem}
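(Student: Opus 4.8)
The plan is to split each source's transmission into a \emph{cancellation} part, handled by Lemma~\ref{lemma:n_cancel}, and an \emph{alignment} part, handled by Cadambe--Jafar-style asymptotic interference alignment, and then to add up the resulting stream counts. First I would fix the relay gain matrices $\mathbf{\Gamma}_j$ as guaranteed by Lemma~\ref{lemma:n_cancel}, so that the end-to-end channel $\bar{\mathbf{G}}_{ki}\triangleq\sum_{j=1}^{L}\bar{\mathbf{H}}_{kj,2}\mathbf{\Gamma}_j\bar{\mathbf{H}}_{ji,1}$ annihilates the $N_1$-dimensional cancellation subspace of source $i$ at every unintended destination $k\neq i$, where $N_1=\min\{\lfloor\frac{LN^2-1}{K(K-1)N}\rfloor,N\}$. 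Each source's $N_1$ cancellation streams then reach only their intended destination, causing no interference, and account for $KN_1$ of the delivered streams over the $N$-symbol block.

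For the remaining dimensions I would run the asymptotic alignment construction on the effective channel: source $1$ sends $N_2$ beamformed streams and each source $i\in\{2,\dots,K\}$ sends $N_3$ beamformed streams, with the beamformers $\bar{\mathbf{v}}_i^{(\ell)}$ chosen so that, at every destination, the interference contributed by the alignment streams of the other sources collapses into a common subspace. Requiring that signal plus interference fit within the $N$ available dimensions yields the single binding constraint $N_1+N_2+N_3\leq N$ at each receiver, and the standard alignment limit drives the fractions $N_2/N$ and $N_3/N$ toward $(N-N_1)/(2N)$ as $N\to\infty$. Decodability then follows by verifying that at each destination the desired $N_1+N_2$ (resp. $N_1+N_3$) streams span a subspace linearly independent of the aligned interference, so that a zero-forcing projection recovers them.

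Counting streams gives $d_{\Sigma}\geq\frac{KN_1+N_2+(K-1)N_3}{N}$; substituting the alignment limit $N_2,N_3\to(N-N_1)/2$ turns this into $\frac{K}{2}\left(1+\frac{N_1}{N}\right)$ as $N\to\infty$. Since Lemma~\ref{lemma:n_cancel} gives $N_1/N\to\min\{\frac{L}{K(K-1)},1\}$, the limit equals $\min\{K,\frac{K}{2}+\frac{L}{2(K-1)}\}$; in the saturated regime $L\geq K(K-1)$ one has $N_1=N$, all interference is cancelled, and each source simply sends $N$ clean streams for the full $K$ DoF.

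The hard part will be the alignment step on the effective channel, not the bookkeeping. Because the relay matrices $\mathbf{\Gamma}_j$ are full, the end-to-end matrices $\bar{\mathbf{G}}_{ki}$ are not diagonal and hence do not commute, so the usual nested construction of alignment subspaces from products of diagonal channel matrices does not transfer verbatim. I would therefore need to argue either that enough of the relay freedom survives the $K(K-1)N_1N$ cancellation constraints to realize the required subspace inclusions, or that generic choices of the surviving parameters keep the desired-signal and interference subspaces in general position. Establishing this separability---equivalently, the full-rank conditions that make zero forcing succeed with probability one under the continuous channel distribution---is where the real work lies, after which the DoF count above follows immediately.
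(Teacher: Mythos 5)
Your proposal follows essentially the same route as the paper: fix the relay gains from Lemma~\ref{lemma:n_cancel} to cancel $N_1$ streams per source, treat the residual network as a $K$-user interference channel with effective channels $\bar{\mathbf{G}}_{ki}$, apply Cadambe--Jafar asymptotic alignment with $N_2=(n+1)^T$, $N_3=n^T$, $T=(K-1)(K-2)-1$, and let $n\to\infty$ so that $N_1/N\to\min\{L/(K(K-1)),1\}$ and $N_2/N,\,N_3/N\to(1-N_1/N)/2$, which reproduces the paper's stream count $\frac{K}{2}+\frac{L}{2(K-1)}$ exactly. The only differences are presentational---the paper splits into the three cases $L>K(K-1)$, $L=K(K-1)$, and $K\le L<K(K-1)$ instead of one unified limit---and the difficulty you flag (that the $\bar{\mathbf{G}}_{ki}$ are non-diagonal, so the commuting diagonal-matrix construction of the alignment subspaces does not transfer verbatim) is a genuine gap, but one the paper shares: it simply cites Appendix III of the $K$-user interference channel paper and defers the details to the full version.
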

\begin{proof}
First of all, we briefly discuss the power constraint issue.
Since each channel used for transmission satisfies $g_{\min}\leq|h_{ji,m}|\leq g_{\max}$ and the absolute value of each relay coefficient can also be bounded between some minimum and maximum values, the noise term in (\ref{eq:input_output}) does not affect the DoF.
For detailed proof, we refer readers to the full version of this paper.

For $L>K(K-1)$ we set $N_1=1$, $N_2=N_3=0$, and $N=1$.
Then, from the result of Lemma \ref{lemma:n_cancel},
\begin{equation}
\min\left\{\left\lfloor\frac{LN^2-1}{K(K-1)N}\right\rfloor,N\right\}=1
\end{equation}
symbol can be cancelled at each unintended destination.
Thus $K$ DoF is achievable without symbol extension.

For $L=K(K-1)$, we set $N_1=n$, $N_2+N_3=0$, and $N=n+1$, where $n>0$ is an arbitrary integer.
Then
\begin{equation*}
\min\left\{\left\lfloor\frac{LN^2-1}{K(K-1)N}\right\rfloor,N\right\}\geq n
\end{equation*}
symbols can be cancelled.
Thus the achievable DoF is given by
\begin{equation*}
\sup_nK\frac{n}{n+1}=K.
\end{equation*}

For $K\leq L< K(K-1)$, we set $N_1=\left\lfloor\frac{(N_2+N_3)L-1}{K(K-1)-L}\right\rfloor$, $N_2=(n+1)^T$, and $N_3=n^T$, and $N=\left\lceil\frac{(N_2+N_3)L-1}{K(K-1)-L}\right\rceil+N_2+N_3$, with $T=(K-1)(K-2)-1$, where $n>0$ is an arbitrary integer.
Then $N_1$ symbols can be cancelled at each unintended destination because
\begin{equation*}
\min\left\{\left\lfloor\frac{LN^2-1}{K(K-1)N}\right\rfloor,N\right\}\geq N_1,
\end{equation*}
where we use the fact that $N\geq \frac{(N_2+N_3)L-1}{K(K-1)-L}+N_2+N_3$.

Notice that since the relays transmit linear combinations of their received signals, the $N$-symbol-extended channel matrix from the $i$-th source to the $k$-th destination is given by
\begin{equation*}
\bar{\mathbf{G}}_{ki}=\sum_{j=1}^{L}\bar{\mathbf{H}}_{kj,2}\mathbf{\Gamma}_j\bar{\mathbf{H}}_{ji,1},
\end{equation*}
which means one can regard the resulting network as the $K$-user interference channel having the $N$-symbol-extended channel matrix $\bar{\mathbf{G}}_{ki}$.
Hence the remaining interference from $K-1$ unintended sources can be aligned at each destination.
We apply the interference alignment technique in \cite{Viveck1:08} to align the remaining interference and refer Appendix III in \cite{Viveck1:08} for the detailed proof.
Let $d_i(n)$ be the number of transmit symbols of the $i$-th source divided by $N$.
Then,
\begin{eqnarray}
\!\!\!\!\!\!\!\!\!d_1(n)\!\!\!\!\!\!\!\!\!\!&&= \frac{N_1+N_2}{N}\nonumber\\
\!\!\!\!\!\!\!\!\!&&\geq\frac{\frac{(N_2+N_3)L-1}{K(K-1)-L}+N_2-1}{\frac{(N_2+N_3)L-1}{K(K-1)-L}+N_2+N_3+1}\nonumber\\
\!\!\!\!\!\!\!\!\!&&=\frac{K(K-1)N_2+LN_3-(K(K-1)+L+1)}{K(K-1)(N_2+N_3)+(K(K-1)-L-1)}.
\label{eq:d_1}
\end{eqnarray}
Similarly, we obtain
\begin{eqnarray}
\!\!\!\!\!\!\!\!\!d_i(n)\!\!\!\!\!\!\!\!\!\!&&= \frac{N_1+N_3}{N}\nonumber\\
\!\!\!\!\!\!\!\!\!&&\geq\frac{LN_2+K(K-1)N_3-(K(K-1)+L+1)}{K(K-1)(N_2+N_3)+(K(K-1)-L-1)}
\label{eq:d_i}
\end{eqnarray}
for $i\in\{2,\cdots,K\}$.
Thus, from (\ref{eq:d_1}) and (\ref{eq:d_i}), the achievable DoF is given by
\begin{eqnarray*}
\sup_n \sum_{i=1}^K d_i(n)= \frac{K}{2}+\frac{L}{2(K-1)}.
\end{eqnarray*}
By combining the above three cases, we show that the achievable DoF is $\min\left\{K,\frac{K}{2}+\frac{L}{2(K-1)}\right\}$, which completes the proof.
\end{proof}

\begin{remark} \label{remark:converse}
If $L\geq K(K-1)$, then $d_{\Sigma}=K$. The achievability is given by Theorem \ref{th:K_L_K}.
The converse can be shown straightforwardly from the cut-set bound.
\end{remark}

Let us compare the resulting DoF with that of the decode-and-forward (DF) based relaying of~\cite{Viveck2:09}, which is given by $\frac{KL}{K+L-1}$.
Notice that if $K\leq 5$ than the AF based relaying provides better DoF but if $K>5$ there exist values of $L$ for which the DF based scheme is better.
Specifically, if the number of relays is less than $\frac{1}{2}(K-1)(K-1-\sqrt{K(K-6)+1})$ or greater than $\frac{1}{2}(K-1)(K-1+\sqrt{K(K-6)+1})$, the AF based relaying is better and the DF is better otherwise.

\section{Achievability for $K$-user $K$-hop Networks}
In this section, we consider a $K$-user $K$-hop network having $K$ nodes in each layer and assume $K$ is even.

\begin{figure}[t!]
  \begin{center}
  \scalebox{0.5}{\includegraphics{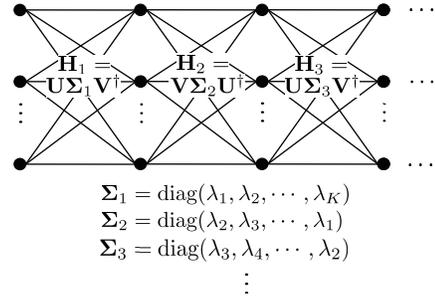}}
  \caption{Opportunistic interference cancellation using relays.}
  \label{fig:opportunistic_int_mit}
  \end{center}
\end{figure}

\subsection{Opportunistic Interference Cancellation}
In this scheme, relays in each layer transmit their received symbols without any modification.
However, they transmit them when the channel matrix of the next hop satisfies a certain condition.
Since the relays may have to wait for such a channel instance for a long time, they need to store some of their past received symbols. To achieve this, we assume block Markov coding is used at relays, i.e., each relay receives a block of $N$ symbols and transmits them (in a permuted order) in the next $N$ time slots. Therefore, there will be one block delay at each hop. To simplify notations, we omit the block indices as before. In the following we only describe how a set of transmitted symbols at any given time at the sources flows through the network.

Let $\mathbf{x}_1$ denote the vector of transmit symbols at any given time at the source nodes. Furthermore, assume it is transmitted through channel matrices $\mathbf{H}_1$ (first hop) through $\mathbf{H}_K$ (last hop).
We define the channel pairing rule from $\mathbf{H}_1$ to $\mathbf{H}_K$ so that the resulting $\mathbf{H}_K\cdots \mathbf{H}_1$ becomes a scaled identity matrix.
Fig. \ref{fig:opportunistic_int_mit} illustrates the basic pairing rule.
Applying the singular value decomposition (SVD), $\mathbf{H}_1$ can be represented as $\mathbf{H}_1=\mathbf{U}\mathbf{\Sigma}_1\mathbf{V}^{\dagger}$, where $\mathbf{U}$ consists of left singular vectors, $\mathbf{\Sigma}_1=\diag\{\lambda_1\,\lambda_2,\cdots,\lambda_K\}$ is the diagonal matrix with ordered singular values, and $\mathbf{V}$ consists of right singular vectors.
Then we choose the next hop channel matrix to be $\mathbf{H}_2=\mathbf{V}\mathbf{\Sigma}_2\mathbf{U}^{\dagger}$, where $\mathbf{\Sigma}_2$ is given as the singular value matrix of $\mathbf{H}_1$ by cyclic shifting the singular values, i.e., $\mathbf{\Sigma}_2=\diag(\lambda_2,\lambda_3,\cdots,\lambda_1)$.
In the same manner, $\mathbf{H}_3$ can be determined from $\mathbf{H}_2$, and so on.
Notice that these paired channels provide $\mathbf{H}_K\cdots \mathbf{H}_1=(\prod_{i=1}^K\lambda_i)\mathbf{I}_K$ meaning $K$ S--D pairs can communicate concurrently without interference.
Similar concepts of opportunistic channel pairing can be found in \cite{Nazer:09, Jafar:09, JeonITA:09} for single-hop networks and in \cite{Jeon:09} for finite-field multi-hop networks.

\subsubsection{Definition of $F_m(\mathbf{H})$}
Let $\mathbf{H}\in\mathbb{C}^{K\times K}$ be a channel instance whose SVD is given by $\mathbf{H}=\mathbf{U}\mathbf{\Sigma}\mathbf{V}^{\dagger}$.
For $m\in\{1,\cdots,K\}$, we define
\begin{equation}
F_m(\mathbf{H})\triangleq\begin{cases}
            \mathbf{U}\mathbf{P}^{m-1}\mathbf{\Sigma}(\mathbf{P}^{m-1})^T\mathbf{V}^{\dagger} & \mbox{ if }m=\mbox{odd} \\
            \mathbf{V}\mathbf{P}^{m-1}\mathbf{\Sigma}(\mathbf{P}^{m-1})^T\mathbf{U}^{\dagger} & \mbox{ if }m=\mbox{even}
         \end{cases}
\label{eq:f_m}
\end{equation}
where
\begin{equation*}
\mathbf{P}=\left[
             \begin{array}{cc}
               \mathbf{0}_{(K-1)\times1}& \mathbf{I}_{K-1} \\
               1 & \mathbf{0}_{1\times(K-1)} \\
             \end{array}
           \right]
\end{equation*}
is the permutation matrix such that the diagonal elements of $\mathbf{P}\mathbf{\Sigma}\mathbf{P}^T$ are equal to the cyclic shift of the diagonal elements of $\mathbf{\Sigma}$.
From the definition, $F_1(\mathbf{H})$ is given by $\mathbf{H}$, where we assume $\mathbf{P}^0=\mathbf{I}_K$.
Note that
\begin{equation*}
\prod_{m=1}^{K}F_m(\mathbf{H})=|\det(\mathbf{H})|\mathbf{I}_K,
\end{equation*}
which is a scaled identity matrix\footnote{ In this paper, $\prod_{m=1}^K \mathbf{A}_m$ denotes $\mathbf{A}_K\mathbf{A}_{K-1}\cdots \mathbf{A}_1$.}.
For notational simplicity, we will use the notation $\Pr(F_m(\mathbf{H}))$ to denote $\Pr(\mathbf{H}_m=F_m(\mathbf{H}))$.
For $m=1$, we will also use $\Pr(\mathbf{H})$ to denote $\Pr(\mathbf{H}_1=\mathbf{H})$ because $F_1(\mathbf{H})=\mathbf{H}$.

\begin{lemma} \label{lemma:Pr_F}
Suppose that the channel coefficients are i.i.d. drawn from a continuous distribution and $\Pr(\mathbf{H}_m)$ is a function of $\|\mathbf{H}_m\|_F$ only.
Then $F_m(\mathbf{H})$ is uniquely determined by $\mathbf{H}$ and
\begin{equation*}
\Pr(\mathbf{H})=\Pr(F_1(\mathbf{H}))=\cdots=\Pr(F_K(\mathbf{H}))
\end{equation*}
for all $\mathbf{H}\in \mathbb{C}^{K\times K}$.
\end{lemma}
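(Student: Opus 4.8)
The plan is to treat the two assertions separately and, for the probability identity, to reduce everything to a statement about the Frobenius norm. Since by hypothesis $\Pr(\mathbf{H}_m)$ depends only on $\|\mathbf{H}_m\|_F$, it suffices to show that $\|F_m(\mathbf{H})\|_F=\|\mathbf{H}\|_F$ for every $m$. Writing $\mathbf{H}=\mathbf{U}\mathbf{\Sigma}\mathbf{V}^{\dagger}$ and using that the Frobenius norm is invariant under left- and right-multiplication by unitary matrices (here $\mathbf{U}$ and $\mathbf{V}$, in an order depending on the parity of $m$), I would reduce the claim to $\|\mathbf{P}^{m-1}\mathbf{\Sigma}(\mathbf{P}^{m-1})^{T}\|_F=\|\mathbf{\Sigma}\|_F$. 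Because $\mathbf{P}$ is a permutation matrix, $\mathbf{P}^{m-1}\mathbf{\Sigma}(\mathbf{P}^{m-1})^{T}$ is again diagonal and its diagonal is merely a cyclic permutation of the singular values; the Frobenius norm of a diagonal matrix, $\sqrt{\sum_i \lambda_i^2}$, is unchanged under permutation. This gives $\|F_m(\mathbf{H})\|_F=\|\mathbf{\Sigma}\|_F=\|\mathbf{H}\|_F$ and hence $\Pr(F_m(\mathbf{H}))=\Pr(\mathbf{H})$ for all $m$.

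The more delicate claim is that $F_m(\mathbf{H})$ is well-defined, i.e.\ independent of the chosen SVD, since the SVD is not unique. First I would invoke the continuous-distribution hypothesis: as $\Pr(\mathbf{H}_m=h)=0$, the singular values of $\mathbf{H}$ are distinct and nonzero with probability one, so on a full-measure set the only remaining freedom in the decomposition is the replacement $(\mathbf{U},\mathbf{V})\mapsto(\mathbf{U}\mathbf{D},\mathbf{V}\mathbf{D})$ with $\mathbf{D}=\diag(e^{i\theta_1},\ldots,e^{i\theta_K})$ a diagonal unitary (paired columns of $\mathbf{U}$ and $\mathbf{V}$ are fixed only up to a common phase). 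Substituting this into the definition, the factor $\mathbf{V}^{\dagger}$ picks up $\mathbf{D}^{\dagger}$ on the right while $\mathbf{U}$ picks up $\mathbf{D}$ on the left (and symmetrically in the even case), so the middle factor becomes $\mathbf{D}\bigl(\mathbf{P}^{m-1}\mathbf{\Sigma}(\mathbf{P}^{m-1})^{T}\bigr)\mathbf{D}^{\dagger}$. Since $\mathbf{P}^{m-1}\mathbf{\Sigma}(\mathbf{P}^{m-1})^{T}$ is diagonal and a diagonal unitary commutes with any diagonal matrix, this equals $\mathbf{P}^{m-1}\mathbf{\Sigma}(\mathbf{P}^{m-1})^{T}$; the two copies of $\mathbf{D}$ cancel and $F_m(\mathbf{H})$ is unchanged, establishing uniqueness on the relevant full-measure set.

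I expect the phase-cancellation argument to be the only genuine obstacle. It is the step where the precise form of the definition is essential: conjugating $\mathbf{\Sigma}$ by the \emph{permutation} $\mathbf{P}^{m-1}$ keeps the middle factor diagonal, which is exactly what lets it absorb the diagonal phase ambiguity; conjugation by an arbitrary unitary would destroy this. I would also remark that distinctness of the singular values is what rules out the larger, block-unitary SVD ambiguity that would otherwise break well-definedness, and this is precisely where the continuous-distribution assumption is used. The remaining ingredients — unitary invariance and permutation-invariance of the Frobenius norm, and the passage from equal norms to equal probabilities via the hypothesis on $\Pr$ — are routine.
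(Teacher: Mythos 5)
Your proof is correct: unitary and permutation invariance of the Frobenius norm gives $\|F_m(\mathbf{H})\|_F=\|\mathbf{H}\|_F$ and hence the density identity, and the phase-ambiguity cancellation (valid almost surely, where the singular values are distinct and nonzero) settles well-definedness, which is the natural and essentially only way to prove this statement. Note that the paper itself defers this proof to its full version, so there is no in-paper argument to compare against; your only (acceptable) deviation from the literal statement is reading ``for all $\mathbf{H}$'' as ``for almost all $\mathbf{H}$,'' which is unavoidable since $F_m$ is genuinely ill-defined at matrices with repeated singular values.
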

\begin{proof}
we refer readers to the full paper \cite{Jeon2:09}.
\end{proof}
\begin{remark} \label{remark:gaussian}
Suppose that the channel coefficients are i.i.d. drawn from $\mathcal{CN}(0,1)$, i.e., Rayleigh fading.
Then $\Pr(\mathbf{H}_m)$ is a function of $\|\mathbf{H}_m\|_F$ only.
\end{remark}

\subsubsection{Transmission scheme}
Let $\mathbf{H}^{\Delta}$ denote the quantized channel matrix in $\Delta(\mathbb{Z}^{K\times K}+j\mathbb{Z}^{K\times K})$ and $\mathcal{H}_1(\mathbf{H}^{\Delta})$ denote the set of all $\mathbf{H}_1$ whose closest point in $\Delta(\mathbb{Z}^{K\times K}+j\mathbb{Z}^{K\times K})$ is equal to $\mathbf{H}^{\Delta}$, respectively.
We further define
\begin{equation}
\mathcal{H}_{m}(\mathbf{H}^{\Delta})=\left\{\mathbf{H}_{m}\big|\mathbf{H}_{m}=F_{m}(\mathbf{H}),\mathbf{H}\in\mathcal{H}_1(\mathbf{H}^{\Delta})\right\}
\end{equation}
for $m\in\{2,\cdots,K\}$.
From Lemma \ref{lemma:Pr_F}, one can easily derive
\begin{equation}
\Pr(\mathcal{H}_{1}(\mathbf{H}^{\Delta}))=\cdots=\Pr(\mathcal{H}_{K}(\mathbf{H}^{\Delta}))\triangleq \Pr(\mathbf{H}^{\Delta})
\label{eq:pr_H_delta}
\end{equation}
for all $\mathbf{H}^{\Delta}\in\Delta(\mathbb{Z}^{K\times K}+j\mathbb{Z}^{K\times K})$.

For transmission, we use the channel $\mathcal{H}_m(\mathbf{H}^{\Delta})$ only when $g_{\min}K\leq\|\mathbf{H}^{\Delta}\|_F\leq g_{\max}K$.
Since $\Pr(h_{ji,m}=h)=0$, the probability of channel utilization can be arbitrarily close to one by setting $g_{\min}$ and $g_{\max}$ arbitrarily small and large, respectively, which does not affect the DoF.

For all $\mathbf{H}^{\Delta}$ satisfying $g_{\min}K\leq\|\mathbf{H}^{\Delta}\|_F\leq g_{\max}K$, the sources transmit their messages to the nodes in the next layer through $\mathbf{H}_1\in\mathcal{H}_1(\mathbf{H}^{\Delta})$ and the relays in the $m$-th layer amplify and forward their received signals to the nodes in the next layer through $\mathbf{H}_m\in\mathcal{H}_m(\mathbf{H}^{\Delta})$, where $m\in\{2,\cdots,K\}$.
That is, we set
\begin{equation}
\mathbf{x}_m=\gamma_m \mathbf{y}_{m-1}.
\label{eq:x_m}
\end{equation}

Suppose that messages are transmitted through a series of particular channel matrices $\mathbf{H}_1$ to $\mathbf{H}_K$ such that $\mathbf{H}_m\in \mathcal{H}_m(\mathbf{H}^{\Delta})$.
Then from (\ref{eq:input_output_vec}) and (\ref{eq:x_m}), we obtain
\begin{equation*}
\mathbf{y}_K=\left(\prod_{m=2}^K\gamma_{m}\right)\left(\prod_{m=1}^K \mathbf{H}_m\right)\mathbf{x}_1+\mathbf{z}_{AF}+\mathbf{z}_K,
\end{equation*}
where
\begin{equation}
\mathbf{z}_{AF}=\sum_{i=2}^K\left(\prod_{j=i}^K\gamma_{j}\right)\left(\prod_{j=i}^K \mathbf{H}_j\right)\mathbf{z}_{i-1}
\label{eq:n_af}
\end{equation}
denotes the accumulated noise due to AF relaying.
Let $\mathbf{H}_1=\mathbf{H}$ and $\mathbf{H}_m=F_m(\mathbf{H})+\mathbf{\Delta}_m$ for $m\in\{1,\cdots,K\}$, where $\mathbf{\Delta}_m$ is the quantization error matrix of $\mathbf{H}_m$ with respect to $F_m(\mathbf{H})$.
From the definition of $F_m(\mathbf{H})$, we know that $\mathbf{\Delta}_1=\mathbf{0}_{K\times K}$.
Then we obtain
\begin{eqnarray*}
\mathbf{y}_K=\left(\prod_{m=2}^K\gamma_{m}\right)|\det(\mathbf{H})|\mathbf{x}_1+\mathbf{\Delta}_{tot}\mathbf{x}_1+\mathbf{z}_{AF}+\mathbf{z}_K,
\end{eqnarray*}
where
\begin{eqnarray}
\mathbf{\Delta}_{tot}\!\!\!\!\!\!\!\!\!&&=\sum_{i=1}^K\mathbf{\Delta}_i\prod_{j=1,j\neq i}^K F_j(\mathbf{H})\nonumber\\
&&+\sum_{i=1}^K\sum_{j<i}^K\mathbf{\Delta}_i\mathbf{\Delta}_j\prod_{k=1,k\neq i,j}^K F_k(\mathbf{H})\nonumber\\
&&+\cdots+\prod_{i=1}^K \mathbf{\Delta}_i,
\label{eq:delta_tot}
\end{eqnarray}
which is the total quantization error matrix.
Then the signal to interference and noise ratio (SINR) of the $k$-th destination is lower bounded by
\begin{equation}
\mbox{SINR}^{\Delta}_k\geq\frac{\left(\big|\prod_{m=2}^K\gamma_{m}\big||\det(\mathbf{H})|-\|\mathbf{\Delta}_{tot}\|_F\right)^2P}{1+\|\mathbf{\Delta}_{tot}\|^2_F KP+E\left(\|\mathbf{z}_{AF}\|^2_F\right)}.
\label{eq:sinr}
\end{equation}

The following two lemmas show achievable rates when the quantization interval $\Delta$ tends to zero.

\begin{lemma} \label{lemma:delta_tot}
As $\Delta\to 0$, $\|\mathbf{\Delta}_{tot}\|_F$ converges to zero.
\end{lemma}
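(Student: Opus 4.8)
The plan is to exploit the structural fact that every summand in the expansion (\ref{eq:delta_tot}) of $\mathbf{\Delta}_{tot}$ contains at least one quantization-error factor $\mathbf{\Delta}_i$, while all the remaining factors are matrices $F_j(\mathbf{H})$ of bounded norm. First I would bound the bounded factors. Since $F_j(\mathbf{H})$ is a unitary conjugation of a cyclic permutation of the diagonal singular-value matrix, namely $\mathbf{U}\mathbf{P}^{j-1}\mathbf{\Sigma}(\mathbf{P}^{j-1})^T\mathbf{V}^{\dagger}$ (or its even-$m$ counterpart), and $\mathbf{U},\mathbf{V},\mathbf{P}$ are orthogonal/unitary, we have $\|F_j(\mathbf{H})\|_F=\|\mathbf{\Sigma}\|_F=\|\mathbf{H}\|_F$. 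Because transmission only uses channels with $g_{\min}K\leq\|\mathbf{H}^{\Delta}\|_F\leq g_{\max}K$ and $\mathbf{H}$ lies in the corresponding quantization cell, $\|F_j(\mathbf{H})\|_F\leq g_{\max}K+O(\Delta)$ is uniformly bounded. Applying the triangle inequality to (\ref{eq:delta_tot}) together with submultiplicativity of the Frobenius norm, $\|\mathbf{A}\mathbf{B}\|_F\leq\|\mathbf{A}\|_F\|\mathbf{B}\|_F$, each summand is bounded by a bounded constant times a product over a nonempty index set of factors $\|\mathbf{\Delta}_i\|_F$. As (\ref{eq:delta_tot}) has a fixed, finite number of summands (depending only on $K$), it therefore suffices to prove $\max_i\|\mathbf{\Delta}_i\|_F\to0$ as $\Delta\to0$.

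To control $\|\mathbf{\Delta}_i\|_F$, recall that $\mathbf{\Delta}_1=\mathbf{0}$, and that for $m\geq2$ we have $\mathbf{\Delta}_m=\mathbf{H}_m-F_m(\mathbf{H})$ with $\mathbf{H}_m\in\mathcal{H}_m(\mathbf{H}^{\Delta})$. By the definition of $\mathcal{H}_m(\mathbf{H}^{\Delta})$, there exists $\tilde{\mathbf{H}}_m\in\mathcal{H}_1(\mathbf{H}^{\Delta})$ with $\mathbf{H}_m=F_m(\tilde{\mathbf{H}}_m)$, so $\mathbf{\Delta}_m=F_m(\tilde{\mathbf{H}}_m)-F_m(\mathbf{H})$. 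Both $\mathbf{H}$ and $\tilde{\mathbf{H}}_m$ lie in the same cell $\mathcal{H}_1(\mathbf{H}^{\Delta})$, whose diameter is $O(K\Delta)$; hence $\|\tilde{\mathbf{H}}_m-\mathbf{H}\|_F\to0$ as $\Delta\to0$. The desired bound then follows provided the map $\mathbf{H}\mapsto F_m(\mathbf{H})$ is continuous at $\mathbf{H}$.

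The main obstacle is exactly this continuity, because $F_m$ is built from an SVD, whose singular vectors are not globally continuous: they are ill-defined at matrices with repeated singular values and carry sign/phase ambiguities that no longer cancel once the singular values are cyclically permuted. I would resolve this by restricting to the generic case. Since the channel entries are drawn from a continuous distribution, $\mathbf{H}$ has distinct, nonzero singular values with probability one, and on this full-measure set the ordered SVD — fixed by the same canonical convention that makes $F_m(\mathbf{H})$ well defined in Lemma \ref{lemma:Pr_F} — depends continuously on $\mathbf{H}$. On this set $\mathbf{H}\mapsto\mathbf{U}\mathbf{P}^{m-1}\mathbf{\Sigma}(\mathbf{P}^{m-1})^T\mathbf{V}^{\dagger}$ is a composition of continuous maps, so $\|\mathbf{\Delta}_m\|_F=\|F_m(\tilde{\mathbf{H}}_m)-F_m(\mathbf{H})\|_F\to0$ as $\tilde{\mathbf{H}}_m\to\mathbf{H}$, i.e. as $\Delta\to0$. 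Combining this with the first paragraph yields $\|\mathbf{\Delta}_{tot}\|_F\to0$, completing the proof; the excluded degenerate singular-value configurations occur with probability zero and hence do not affect the DoF.
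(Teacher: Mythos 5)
Your proof is correct, and since the paper defers this lemma's proof to the full version there is nothing explicit to compare it against; your argument is the natural one the paper's setup implies. The two key observations --- that every summand of $\mathbf{\Delta}_{tot}$ in (\ref{eq:delta_tot}) carries at least one factor $\mathbf{\Delta}_i$ while the $F_j(\mathbf{H})$ factors have Frobenius norm $\|\mathbf{H}\|_F\leq g_{\max}K+O(\Delta)$, and that $\|\mathbf{\Delta}_m\|_F=\|F_m(\tilde{\mathbf{H}}_m)-F_m(\mathbf{H})\|_F\to 0$ by continuity of $F_m$ on the full-measure set of matrices with distinct nonzero singular values --- are exactly what is needed, and your handling of the SVD ambiguity is sound (indeed, with distinct nonzero singular values the dyads $\mathbf{v}_i\mathbf{u}_i^{\dagger}$ are phase-invariant, so $F_m$ is well defined and continuous there without further convention).
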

\begin{proof}
we refer readers to the full paper \cite{Jeon2:09}.
\end{proof}

\begin{lemma} \label{lemma:closed_form}
As $\Delta\to 0$, the following rate is achievable:
\begin{eqnarray*}
R_k=\int_{g_{\min}K\leq\|\mathbf{H}\|_F\leq g_{\max}K}\!\!\!\!\!\log\left(1+\mbox{SINR}_k\right)\Pr(\mathbf{H})d\mathbf{H}-\epsilon_n,
\end{eqnarray*}
where
\begin{equation}
\mbox{SINR}_k=1+\frac{\prod_{m=2}^K\gamma^2_{m}\det(\mathbf{H})^2P}{1+E\left(\|\mathbf{z}_{AF}\|^2_F\right)}
\label{eq:closed_form}
\end{equation}
and $\epsilon_n>0$ converges to zero as $n$ tends to infinity.
\end{lemma}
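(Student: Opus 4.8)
The plan is to read the scheme as an ergodic/opportunistic coding argument. Each source uses a long codeword whose symbols are transmitted only in admissible first-hop realizations, i.e. those $\mathbf{H}_1=\mathbf{H}\in\mathcal{H}_1(\mathbf{H}^{\Delta})$ with $g_{\min}K\leq\|\mathbf{H}^{\Delta}\|_F\leq g_{\max}K$; the relays buffer these symbols and forward them when their own hop channel falls in the paired bin $\mathbf{H}_m\in\mathcal{H}_m(\mathbf{H}^{\Delta})$. Conditioned on a successful pairing through all $K$ hops, the end-to-end channel collapses to $(\prod_{m=2}^K\gamma_m)|\det(\mathbf{H})|\mathbf{I}_K$ plus the residual inter-user term $\mathbf{\Delta}_{tot}\mathbf{x}_1$, the accumulated AF noise $\mathbf{z}_{AF}$, and the last-hop noise $\mathbf{z}_K$, so destination $k$ sees a scalar additive channel whose SINR is lower bounded by (\ref{eq:sinr}). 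The remaining work is to (i) show the pairing is realizable with vanishing overhead, (ii) achieve $\log(1+\mbox{SINR}^{\Delta}_k)$ per admissible bin, and (iii) pass to the limit $\Delta\to 0$.

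For feasibility of the pairing I would invoke (\ref{eq:pr_H_delta}): for every bin $\mathbf{H}^{\Delta}$ the events $\mathcal{H}_1(\mathbf{H}^{\Delta}),\ldots,\mathcal{H}_K(\mathbf{H}^{\Delta})$ occur with the common probability $\Pr(\mathbf{H}^{\Delta})$. Over $n$ independent blocks the law of large numbers forces the empirical count of first-hop occurrences of $\mathbf{H}^{\Delta}$ and of each paired occurrence $\mathcal{H}_m(\mathbf{H}^{\Delta})$ at hop $m$ to concentrate around $n\Pr(\mathbf{H}^{\Delta})$. Hence, with probability approaching one, a first-hop transmission in bin $\mathbf{H}^{\Delta}$ can be matched to later occurrences of the required $F_2(\mathbf{H}),\ldots,F_K(\mathbf{H})$ at the successive hops, and only a vanishing fraction of symbols remain unmatched due to finite-horizon edge effects and bounded buffer delay. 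This mismatch loss, together with the rate back-off for reliable decoding at finite $n$, is absorbed into $\epsilon_n\to 0$.

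Conditioned on a fixed admissible bin, each destination faces a scalar channel, so using i.i.d. Gaussian inputs of power $P$ and treating $\mathbf{\Delta}_{tot}\mathbf{x}_1$ plus the accumulated noise as noise (a worst-case-noise bound, under which Gaussianizing the aggregate interference-plus-noise only decreases the achievable rate), user $k$ achieves $\log(1+\mbox{SINR}^{\Delta}_k)$ with the bound (\ref{eq:sinr}). Averaging over admissible bins weighted by $\Pr(\mathbf{H}^{\Delta})$ gives the rate $\sum_{\mathbf{H}^{\Delta}}\Pr(\mathbf{H}^{\Delta})\log(1+\mbox{SINR}^{\Delta}_k)-\epsilon_n$. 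Letting $\Delta\to 0$, Lemma \ref{lemma:delta_tot} gives $\|\mathbf{\Delta}_{tot}\|_F\to 0$, so the numerator of (\ref{eq:sinr}) tends to $\prod_{m=2}^K\gamma_m^2\det(\mathbf{H})^2P$ and the denominator tends to $1+E(\|\mathbf{z}_{AF}\|_F^2)$, yielding the limiting $\mbox{SINR}_k$ of (\ref{eq:closed_form}). On the compact region $g_{\min}K\leq\|\mathbf{H}\|_F\leq g_{\max}K$ the singular values of $\mathbf{H}$, the gains $\gamma_m$, $\det(\mathbf{H})$, and $E(\|\mathbf{z}_{AF}\|_F^2)$ are all bounded, so $\log(1+\mbox{SINR}_k)$ is bounded; the weighted sum over bins is then a Riemann sum that converges, by bounded convergence, to $\int_{g_{\min}K\leq\|\mathbf{H}\|_F\leq g_{\max}K}\log(1+\mbox{SINR}_k)\Pr(\mathbf{H})\,d\mathbf{H}$, which is the claimed rate.

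I expect the feasibility-with-vanishing-overhead step to be the main obstacle. One must argue that the relay buffers stay bounded and that both the induced delay and the unmatched fraction vanish, even though matching must succeed \emph{simultaneously} at all $K$ hops; this hinges entirely on the equal-probability property (\ref{eq:pr_H_delta}) guaranteed by Lemma \ref{lemma:Pr_F}. Making the concentration argument precise, namely choosing the block length $n$ and the buffer horizon, and controlling the $\epsilon_n$ back-off uniformly over the (finitely many, after truncation to the compact region) admissible bins, is the delicate part, whereas the per-bin SINR computation and the limit $\Delta\to 0$ are routine given the earlier lemmas.
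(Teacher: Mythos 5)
Your proposal is correct and follows essentially the route the paper intends: the paper relegates the detailed proof to its full version, but the ingredients it sets up --- the equal-probability property (\ref{eq:pr_H_delta}) enabling ergodic-style channel pairing with vanishing matching loss $\epsilon_n$, the per-bin SINR lower bound (\ref{eq:sinr}) obtained by treating the quantization residual and accumulated AF noise as noise, and Lemma \ref{lemma:delta_tot} to pass to the limit $\Delta\to 0$ and turn the sum over bins into the integral --- are used exactly as you use them. The only discrepancy is immaterial: your limit computation yields $\mbox{SINR}_k=\prod_{m=2}^K\gamma^2_m\det(\mathbf{H})^2P/\bigl(1+E(\|\mathbf{z}_{AF}\|^2_F)\bigr)$ without the leading ``$1+$'' appearing in (\ref{eq:closed_form}), which is evidently a typo in the lemma statement rather than a gap in your argument.
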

\begin{proof}
we refer readers to the full paper \cite{Jeon2:09}.
\end{proof}

\subsection{DoF of $K$-user $K$-hop networks}
Based on the previous lemmas, we derive the achievable DoF of the $K$-user $K$-hop network.
\begin{theorem} \label{th:K_user_K_hop}
Suppose a $K$-user $K$-hop network with $K$ nodes in each layer.
If $K$ is even and $\Pr(\mathbf{H})$ is a function of $\|\mathbf{H}\|_F$ only, then $d_{\Sigma}=K$.
\end{theorem}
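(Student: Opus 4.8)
The plan is to prove $d_{\Sigma}=K$ by establishing $d_{\Sigma}\le K$ and $d_{\Sigma}\ge K$ separately, with essentially all of the effort in the achievability. The converse is immediate from the cut-set bound: cutting the network between the sources and the first relay layer exposes a point-to-point $K\times K$ MIMO link (the $K$ cooperating sources transmitting to the $K$ relays), whose DoF is at most $K$; since every message must cross this cut, $C_{\Sigma}(P)\le K\log P+o(\log P)$ and hence $d_{\Sigma}\le K$.

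For achievability I would run the opportunistic interference cancellation scheme with fixed $g_{\min},g_{\max}$ and then send the quantization interval $\Delta\to 0$. The feasibility of the channel pairing rests on Lemma~\ref{lemma:Pr_F} and (\ref{eq:pr_H_delta}): since $\Pr(\mathbf{H})$ depends only on $\|\mathbf{H}\|_F$, every quantization cell occurs with identical probability in each of the $K$ hops, so over a long block the sources' instances $\mathbf{H}_1\in\mathcal{H}_1(\mathbf{H}^{\Delta})$ and the relays' paired instances $\mathbf{H}_m\in\mathcal{H}_m(\mathbf{H}^{\Delta})$ can be matched up with vanishing mismatch. The hypothesis that $K$ is even is exactly what makes the paired product telescope to the scaled identity $\prod_{m=1}^K F_m(\mathbf{H})=|\det(\mathbf{H})|\mathbf{I}_K$. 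Lemma~\ref{lemma:delta_tot} then drives $\|\mathbf{\Delta}_{tot}\|_F\to 0$ as $\Delta\to 0$, so the effective end-to-end channel collapses to $(\prod_{m=2}^K\gamma_m)|\det(\mathbf{H})|\mathbf{I}_K$ and the $K$ pairs see parallel, interference-free links; Lemma~\ref{lemma:closed_form} then delivers the per-pair rate $R_k$ as an integral of $\log(1+\mbox{SINR}_k)$ over the admissible region.

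The main obstacle is showing that $\mbox{SINR}_k$ grows \emph{linearly} in $P$, so that each pair earns a full degree of freedom. Here the restriction $g_{\min}K\le\|\mathbf{H}^{\Delta}\|_F\le g_{\max}K$ is essential: with all channel coefficients bounded away from $0$ and $\infty$, I would choose the amplify-and-forward gains $\gamma_m$ to meet the per-relay power constraint and argue inductively that each $\gamma_m=\Theta(1)$, since the power entering every layer is already $\Theta(P)$ and no growing gain is needed. Consequently the signal factor $\prod_{m=2}^K\gamma_m^2\det(\mathbf{H})^2=\Theta(1)$, while in (\ref{eq:n_af}) every propagated noise term is scaled only by bounded products of the $\gamma_j$ and $\mathbf{H}_j$, so the accumulated AF noise $\E(\|\mathbf{z}_{AF}\|_F^2)$ stays bounded by a constant independent of $P$. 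Hence $\mbox{SINR}_k=\Theta(P)$ and $\log(1+\mbox{SINR}_k)=\log P+O(1)$ uniformly over the region.

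Finally I would collect the degrees of freedom. Integrating against $\Pr(\mathbf{H})$ gives $R_k\ge \Pr(\text{region})\log P+O(1)$, and because $\Pr(h_{ji,m}=h)=0$ the probability of the admissible region tends to $1$ as $g_{\min}\to 0$ and $g_{\max}\to\infty$; the block-Markov delay and the few unused slots contribute only vanishing normalization overhead. Thus $\lim_{P\to\infty}R_k/\log P\ge 1-\delta$ for arbitrarily small $\delta>0$, and summing over the $K$ pairs yields $d_{\Sigma}\ge K(1-\delta)$ for every $\delta$, i.e., $d_{\Sigma}\ge K$. Combined with the cut-set upper bound, this gives $d_{\Sigma}=K$.
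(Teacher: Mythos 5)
Your overall route is the same as the paper's: cut-set converse, plus achievability via the opportunistic pairing scheme, Lemmas~\ref{lemma:Pr_F}--\ref{lemma:closed_form}, and a $\Theta(1)$ bound on the relay gains and the accumulated AF noise coming from the restriction $g_{\min}K\le\|\mathbf{H}^{\Delta}\|_F\le g_{\max}K$. However, there is a genuine gap in the step where you claim $\prod_{m=2}^K\gamma_m^2\det(\mathbf{H})^2=\Theta(1)$ and hence $\log(1+\mbox{SINR}_k)=\log P+O(1)$ \emph{uniformly} over the admissible region. Bounding the Frobenius norm (equivalently, the channel coefficients) away from $0$ and $\infty$ does \emph{not} bound $|\det(\mathbf{H})|$ away from zero: matrices with entries in $[g_{\min},g_{\max}]$ can be singular or arbitrarily close to singular, so on a non-null subset of the admissible region $\mbox{SINR}_k$ can be far smaller than $\Theta(P)$. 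The signal term in (\ref{eq:closed_form}) scales with $\det(\mathbf{H})^2P$, and the uniform estimate you invoke is simply false.

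This is precisely the point to which the paper devotes the explicit, non-deferred part of its achievability argument: it expands $\det(\mathbf{H})=\sum_j h_{ij}C_{ij}$ and observes that $\det(\mathbf{H})=0$ forces $h_{i1}$ to equal a specific function of the other entries, an event of probability zero under a continuous distribution, so the singular set does not affect the DoF. To close your gap you need an argument of this type applied inside the rate integral of Lemma~\ref{lemma:closed_form} --- e.g., showing that $\E\left[\log\det(\mathbf{H})^2\right]$ restricted to the admissible region is finite (not merely that $\det(\mathbf{H})\ne 0$ almost surely), so that
\begin{equation*}
\int \log\left(1+c\det(\mathbf{H})^2P\right)\Pr(\mathbf{H})\,d\mathbf{H}=\log P+O(1).
\end{equation*}
Everything else in your proposal (the role of evenness of $K$ in making $\prod_{m=1}^K F_m(\mathbf{H})$ a scaled identity, the quantization limit $\Delta\to 0$, the cut-set converse, and letting $g_{\min}\to 0$, $g_{\max}\to\infty$ at the end) matches the paper and is fine.
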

\begin{proof}
Because we choose channel matrices satisfying $g_{\min}K\leq\|\mathbf{H}_m\|_F\leq g_{\max}K$ for transmission, the relay coefficients satisfying power constraint $P$ can be bounded between strictly positive finite minimum and maximum values.
Hence the terms $\prod_{m=2}^K\gamma^2_{m}$ and $E\left(\|\mathbf{z}_{AF}\|_F^2\right)$ in (\ref{eq:closed_form}) does not affect the DoF.

Let us now consider $\det(\mathbf{H})$ in (\ref{eq:closed_form}) that can be represented as
\begin{equation*}
\det(\mathbf{H})=\sum_{j=1}^K h_{ij}C_{ij},
\end{equation*}
where $h_{ij}$ denotes $(i,j)$-th element of $\mathbf{H}$ and $C_{ij}$ is the cofactor, which is a function of $\cup_{k\neq i,l\neq j}\{h_{kl}\}$.
Hence, for given $C_{i1}$ to $C_{iK}$ and $h_{i2}$ to $h_{ik}$, $\det(\mathbf{H})$ becomes zero if and only if
\begin{equation*}
h_{i1}=-\frac{h_{i2}C_{i2}+\cdots+h_{iK}C_{iK}}{C_{i1}}.
\end{equation*}
However this event occurs with probability $0$. Therefore, this does not affect the DoF and
we can show DoF of $K$ is achievable. For detailed proof, we refer readers to the full version of this paper.
The converse can be shown similarly as in Remark \ref{remark:converse}, which completes the proof.
\end{proof}

Notice that, from Remark \ref{remark:gaussian}, the network with i.i.d. Gaussian channel distributions is a special class of Theorem \ref{th:K_user_K_hop}.


\begin{thebibliography}{10}
\providecommand{\url}[1]{#1}
\def\UrlFont{\rmfamily}
\providecommand{\newblock}{\relax} \providecommand{\bibinfo}[2]{#2}
\providecommand\BIBentrySTDinterwordspacing{\spaceskip=0pt\relax}
\providecommand\BIBentryALTinterwordstretchfactor{4}
\providecommand\BIBentryALTinterwordspacing{\spaceskip=\fontdimen2\font
plus \BIBentryALTinterwordstretchfactor\fontdimen3\font minus
  \fontdimen4\font\relax}
\providecommand\BIBforeignlanguage[2]{{%
\expandafter\ifx\csname l@#1\endcsname\relax
\typeout{** WARNING: IEEEtran.bst: No hyphenation pattern has been}%
\typeout{** loaded for the language `#1'. Using the pattern for}%
\typeout{** the default language instead.}%
\else \language=\csname l@#1\endcsname \fi #2}}


\bibitem{Han:81}
T. Han and K. Kobayashi, ``A new achievable rate region for the interference channel,'' \emph{{IEEE} Trans. Inf. Theory}, vol. 27, pp. 49--60, Jan. 1981.

\bibitem{Etkin:08}
R. H. Etkin, D. Tse, and H. Wang, ``Gaussian interference channel capacity to within one bit,'' \emph{{IEEE} Trans. Inf. Theory}, vol. 54, pp. 5534--5562, Dec. 2008.

\bibitem{Bresler:07}
G. Bresler, A. Parekh, and D. Tse, ``The approximate capacity of the many-to-one and one-to-many {G}aussian interference channels,'' in \emph{Proc. 45th Annu. Allerton Conf. Communication, Control, and Computing}, Monticello, IL, Sep. 2007.

\bibitem{Nam:08}
W. Nam, S.-Y. Chung, and Y. H. Lee, ``Capacity bounds for two-way relay channels,'' in \emph{Proc. {IEEE} International Zurich Seminar}, Z{\"u}rich, Switzerland, Mar. 2008.

\bibitem{Viveck1:08}
V. R. Cadambe and S. A. Jafar, ``Interference alignment and degrees of freedom of the K-user interference channel,'' \emph{{IEEE} Trans. Inf. Theory}, vol. 54, pp. 3425--3441, Aug. 2008.

\bibitem{Tiangao:09}
T. Gou and S. A. Jafar, ``Degrees of freedom of the {$K$} user {$M\times N$} {MIMO} interference channel,'' in \emph{arXiv:cs.IT/0809.0099}, Sep. 2008.

\bibitem{Viveck3:09}
V. R. Cadambe, S. A. Jafar, and S. {Shamai (Shitz)}, ``Interference alignment on the deterministic channel and application to fully connected {G}aussian interference networks,'' \emph{{IEEE} Trans. Inf. Theory}, vol. 55, pp. 269--274, Jan. 2009.

\bibitem{Viveck2:09}
V. R. Cadambe and S. A. Jafar, ``Interference alignment and the degrees of freedom of wireless {$X$} networks,'' \emph{{IEEE} Trans. Inf. Theory}, vol. 55, pp. 3893--3908, Sep. 2009.

\bibitem{Viveck1:09}
V. R. Cadambe and S. A. Jafar, ``Degrees of freedom of wireless networks with relays, feedback, cooperation, and full duplex operation,'' \emph{{IEEE} Trans. Inf. Theory}, vol. 55, pp. 2334-2344, May 2009.

\bibitem{HostMadsen:05}
A. Host-{M}adsen and A. Nosratinia, ``The multiplexing gain of wireless networks,'' in \emph{Proc. {IEEE} Int. Symp. Information Theory (ISIT)}, Adelaide, Australia, Sep. 2005.

\bibitem{Bolcskei:06}
H. B{\"o}lcskei, R. U. Nabar, {\"O}. Oyman, and A. J. Paulraj, ``Capacity scaling laws in {MIMO} relay networks,'' \emph{{IEEE} Trans. Wireless Commun.}, vol. 5, pp. 1433--1444, Jun. 2006.

\bibitem{Morgenshtern:07}
V. I. Morgenshtern, and H. B{\"o}lcskei, ``Crystallization in large wireless networks,'' \emph{{IEEE} Trans. Inf. Theory}, vol. 53, pp. 3319--3349, Oct. 2007.

\bibitem{Mohajer:08}
S. Mohajer, S. N. Diggavi, C. Fragouli, and D. Tse, ``Transmission techniques for relay-interference networks,'' in \emph{Proc. 46th Annu. Allerton Conf. Communication, Control, and Computing}, Monticello, IL, Sep. 2008.

\bibitem{Mohajer:09}
S. Mohajer, S. N. Diggavi, and D. Tse, ``Approximate capacity of a class of {G}aussian relay-interference networks,'' in \emph{Proc. {IEEE} Int. Symp. Information Theory (ISIT)}, Seoul, Korea, Jun./Jul. 2009.

\bibitem{Jeon:09}
S.-W. Jeon and S.-Y. Chung, ``Capacity of a class of multi-source relay networks,'' in \emph{arXiv:cs.IT/0907.2510}, Jul. 2009.

\bibitem{JeonITA:09}
S.-W. Jeon and S.-Y. Chung, ``Capacity of a class of multi-source relay networks,'' in \emph{Information Theory and Applications Workshop}, University of California San Diego, La Jolla, CA, Feb. 2009.

\bibitem{JeonISIT:09}
S.-W. Jeon and S.-Y. Chung, ``Sum capacity of multi-source linear finite-field relay networks with fading,'' in \emph{Proc. {IEEE} Int. Symp. Information Theory (ISIT)}, Seoul, Korea, Jun./Jul. 2009.

\bibitem{Nazer:09}
B. Nazer, M. Gastpar, S. A. Jafer, and S. Vishwanath, ``Ergodic interference alignment,'' in \emph{Proc. {IEEE} Int. Symp. Information Theory (ISIT)}, Seoul, Korea, Jun./Jul. 2009.

\bibitem{Jafar:09}
S. A. Jafar, ``The ergodic capacity of interference networks,'' in \emph{arXiv:cs.IT/0902.0838}, Feb. 2009.

\bibitem{Jeon2:09}
S.-W. Jeon, S.-Y. Chung, and S. A. Jafar, ``Degrees of freedom of multi-source relay networks,'' in preparation.

\end{thebibliography}

\end{document}